\newcommand{\myetal}{et al.\mbox{}}
\newcommand{\mysize}[1]{{\lvert #1 \rvert}}
\newcommand{\myG}{\mathcal{G}}
\newtheorem{theorem}{Theorem}
\newtheorem{lemma}{Lemma}
\theoremstyle{definition}
\newtheorem{definition}{Definition}
\theoremstyle{remark}
\begin{document}

\title{Almost Stable Matchings in Constant Time}
\newcommand{\bat}{@\allowbreak}
\newcommand{\bdot}{.\allowbreak}
\author{Patrik Flor\'{e}en, Petteri Kaski, \\ Valentin Polishchuk, and Jukka Suomela\thanks{%
    Helsinki Institute for Information Technology HIIT,
    Helsinki University of Technology and University of Helsinki.
    \emph{Address:}
    P.O. Box 68,
    FI-00014 University of Helsinki,
    Finland.
    \emph{Email addresses:}
    patrik\bdot floreen\bat cs.helsinki.fi,
    petteri\bdot kaski\bat cs.helsinki.fi,
    valentin\bdot polishchuk\bat cs.helsinki.fi,
    jukka\bdot suomela\bat cs.helsinki.fi.
}}
\date{December 2008}
\maketitle
\begin{abstract}

We show that the ratio of matched individuals to blocking pairs grows linearly with the number of propose--accept rounds executed by the Gale--Shapley algorithm for the stable marriage problem. Consequently, the participants can arrive at an almost stable matching even without full information about the problem instance; for each participant, knowing only its local neighbourhood is enough. In distributed-systems parlance, this means that if each person has only a constant number of acceptable partners, an almost stable matching emerges after a constant number of synchronous communication rounds. This holds even if ties are present in the preference lists.

We apply our results to give a distributed $(2+\epsilon)$-approximation algorithm for maximum-weight matching in bicoloured graphs and a centralised randomised constant-time approximation scheme for estimating the size of a stable matching.

\end{abstract}

\section{Introduction}

The social networking boom brings up computational challenges unforeseen in the past. In a modern large-scale network, gathering full information about the whole network in one place is practically impossible. This motivates design and analysis of \emph{local algorithms} \cite{linial92locality,naor95what,suomela08survey} in which the output of a node depends only on the input within a constant number of edges (hops) from the node.

In this paper we study a local version of the classical Gale--Shapley algorithm \cite{gale62college} for the \emph{stable marriage problem}. We show that early termination of the algorithm leads to a matching with relatively few unstable edges.

\subsection{Matchings}\label{sec:definition}

We follow the convention (see, e.g., Fleiner \cite{fleiner03fixed-point}) that an instance of the stable marriage problem is specified by a simple undirected bipartite graph $\myG=(R \cup B, E)$; the graph may be disconnected but there are no isolated nodes. Call the nodes in $R$ {\em red}, the nodes in $B$ {\em blue}, and the graph $\myG$ a {\em bicoloured} graph. Each node $v\in R \cup B$ has a linear order on the adjacent nodes. The linear order constitutes the {\em matching preference} for $v$; if a node $u$ is not adjacent to $v$, then neither $u$ is an acceptable partner for $v$ nor $v$ is acceptable for $u$.

A set of edges $M\subseteq E$ is a {\em matching} if every node is incident to at most one edge in $M$. Two nodes joined by an edge in $M$ are {\em matched} in $M$; a node not incident to an edge in $M$ is {\em unmatched} in $M$. A matching is {\em maximal} if it is not a subset of a larger matching. A matching is {\em maximum} if it has the maximum size among all the matchings. In an edge-weighted graph, a {\em maximum-weight} matching is the one that has the maximum weight among all the matchings. A {\em greedy} matching is a maximal matching obtained by adding the edges one by one, in order of decreasing weight. It is well-known that the greedy matching is a $2$-approximation to the maximum-weight one.

Let $M$ be a matching. An edge $\{u,v\} \in E\setminus M$ is {\em unstable} relative to $M$ if both (i)~$u$ is unmatched or prefers $v$ over its current match in $M$; and (ii)~$v$ is unmatched or prefers $u$ over its current match in $M$. The matching $M$ is {\em stable} if there are no unstable edges. Let $\epsilon > 0$ be some constant. This work is centred on the following notion of ``almost stability''.
\begin{definition}
A matching $M$ is \emph{$\epsilon$-stable} if the number of unstable edges is at most $\epsilon \mysize{M}$.
\end{definition}

Denote by $\Delta$ the maximum degree of a node of $\myG$. That is, $\Delta$ is the maximum number of acceptable partners for one participant. In a social network, the bound on the degree of a node $v$ may not necessarily be due to $v$ being particularly picky; the bound may just reflect the fact that the number of participants, information about whom is available (or comprehensible, or relevant) to $v$, is limited. Throughout this work we assume that $\Delta$ is a constant.

\subsection{Model of Distributed Computing}\label{sec:model}

We view $\myG$ as the communication graph of a distributed system: if $\{u,v\} \in E$, then the node $u$ and the node $v$ can exchange messages.

Let $T$ be a constant that determines the trade-off between the stability and the running time. Initially, each node $v \in R \cup B$ knows the following information: its colour (i.e., whether $v \in R$ or $v \in B$), its degree $d(v)$, and the constant $T$. The node $v$ has $d(v)$ ports through which it can communicate with its neighbours; the ports are numbered according to $v$'s preferences.

We assume synchronous communication. On each time step, every node can (i)~receive messages from its neighbours, (ii)~perform deterministic local computations, and (iii)~send a message to each of its neighbours. The algorithm runs for $T$ synchronous time steps, after which each node needs to produce the output: which of the neighbours, if any, is its partner in the matching. We also require that the output is consistent: if $u$ announces that $v$ is its partner, then $v$ must announce that $u$ is its partner.

\subsection{Statement of Results}\label{sec:results}

Our main contribution lies in an analysis of the ``transient phase'' of the Gale--Shapley algorithm for the stable marriage problem. We truncate the algorithm to a specified number of proposal rounds and investigate the resulting matching for stability. We find that a constant number of rounds suffice to obtain a constant ratio of unstable edges to matching edges. We obtain the following result.
\begin{theorem}\label{thm:almost-stable}
There exists a deterministic distributed algorithm that finds an $\epsilon$-stable matching in a bicoloured graph in time $T \leq 4 + 2\Delta^2/\epsilon$.
\end{theorem}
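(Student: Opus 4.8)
The plan is to run the Gale--Shapley algorithm in its proposal form for a fixed number of synchronous rounds and then bound the unstable edges of the truncated matching. Orient the algorithm so that the red nodes propose and the blue nodes accept or reject: each red node keeps a pointer into its preference list; in a round, every red node that is not currently held sends a proposal to the neighbour at its pointer, every blue node tentatively holds the most preferred red node among its current holder and the new proposers and rejects the rest, and every rejected red node advances its pointer. This is directly implementable in the model of Section~\ref{sec:model}: a node uses its colour to decide whether it proposes or responds, its ports are already ordered by preference, and two time steps suffice per proposal round, so $T$ time steps realise $\Theta(T)$ rounds. After the allotted rounds each node outputs its current tentative partner, which is consistent by construction.

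The analysis rests on a few monotonicity facts. First, a blue node's tentative partner only improves over time and, once it has received any proposal, it stays matched forever; hence the set of matched nodes---and in particular $\mysize{M}$---is nondecreasing across rounds. Second, if a blue node has rejected a red node, it has a strictly better holder from then on, so it never again forms an unstable edge with that red node. Combining this with the fact that a proposing red node descends its list in order of preference, I would show that every unstable edge of the truncated matching is incident to an \emph{unfinished} red node---one that is neither currently held nor has exhausted its list. Writing $U_t$ for the number of unfinished red nodes after round $t$, this yields (unstable edges)~$\le \Delta\, U_t$, since each such node has at most $\Delta$ incident edges.

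It then remains to control $U_t$, and I would establish two bounds. First, $U_t$ is nonincreasing: the proposals in a round are exactly the $U_{t-1}$ unfinished nodes, and each either succeeds on an already-matched blue node (bumping at most one node, so the count is unchanged), succeeds on an unmatched blue node (strictly shrinking the unfinished set), or fails; a short count then shows the number rejected in the round is at most $U_{t-1}$. Second, $\sum_t U_t$ is small: a rejected red node advances its pointer, each blue node rejects at most $\deg(\cdot)\le\Delta$ distinct red nodes over the whole run, and only \emph{matched} blue nodes ever reject anyone, so the total number of rejections---hence $\sum_t U_t$---is at most $\Delta\,\mysize{M}$.

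Putting the pieces together, nonincreasingness gives $U_T \le \frac1T\sum_{t\le T} U_t \le \Delta\,\mysize{M}/T$, whence the number of unstable edges is at most $\Delta^2\mysize{M}/T$, which is at most $\epsilon\mysize{M}$ once $T \ge \Delta^2/\epsilon$; the remaining additive constant and the factor of two in $4+2\Delta^2/\epsilon$ absorb the two-time-steps-per-round overhead and the off-by-one in round indexing. The main obstacle is the first of the two bounds: the bumping dynamics make the per-round \emph{unstable} count non-monotone, so a naive averaging would only locate a good \emph{intermediate} round, which the oblivious nodes cannot detect. Proving that $U_t$ itself (rather than the unstable count) is monotone, together with charging all rejections to matched blue nodes, is what both pins the exponent to $\Delta^2$ and licenses reading off the answer at the fixed final round~$T$.
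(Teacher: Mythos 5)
Your proposal is correct and takes essentially the same route as the paper: your count $U_t$ of unfinished red nodes is precisely the potential $f_t(R)$ of Section~\ref{sec:analysis} specialised to unit weights, and your four ingredients --- (unstable edges) $\le \Delta U_t$, monotonicity of $U_t$, charging all rejections to matched blue nodes so that $\sum_t U_t \le \Delta\mysize{M}$, and the averaging $U_T \le \frac{1}{T}\sum_{t\le T} U_t$ --- correspond to Lemmas \ref{lem:unstable}, \ref{lem:pot-dec}, \ref{lem:lost-ub} combined with \ref{lem:pot-ub}, and \ref{lem:lost-lb} respectively. The only difference is that the paper runs the same argument with general edge weights so the machinery can be reused for Theorem~\ref{thm:weighted-matching}; for Theorem~\ref{thm:almost-stable} your unit-weight version suffices.
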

The algorithm works even in the case when ties in the preference lists are allowed.

We also consider the case where $\myG$ is an edge-weighted graph, and the preferences are determined by the edge weights: each node prefers its incident edges in order of their weight, with the heaviest edge being the most preferred. If all weights are distinct, the unique {\em stable} matching in $\myG$ is the greedy one, and its weight is at least $1/2$ of that of the maximum-weight matching. In general, an {\em almost} stable matching can be a poor approximation to the maximum-weight one; say, a super-heavy edge may be missing from the almost stable matching. Fortunately, the particular almost stable matching from Theorem~\ref{thm:almost-stable} is much better in this sense.
\begin{theorem}\label{thm:weighted-matching}
There exists a deterministic distributed algorithm that finds a ${(2+\epsilon)}$-approximation for maximum-weight matching in bicoloured graphs in time $T \leq 4 + 2\Delta/\epsilon$.
\end{theorem}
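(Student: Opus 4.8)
The plan is to run the algorithm of Theorem~\ref{thm:almost-stable} on the preference profile induced by the edge weights, where each node ranks its incident edges from heaviest to lightest, choosing the stability parameter to be $\Delta\epsilon$ rather than $\epsilon$. By Theorem~\ref{thm:almost-stable} this runs in time $T \le 4 + 2\Delta^2/(\Delta\epsilon) = 4 + 2\Delta/\epsilon$, as required, and produces a matching $M$. Let $M^*$ be a maximum-weight matching; I want to show $w(M^*) \le (2+\epsilon)\,w(M)$.

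First I would reduce the weight comparison to a bound on the total weight of the unstable edges, reusing the textbook argument that a stable matching is a $2$-approximation. Classify each edge $\{u,v\}\in M^*$ as \emph{dominated} if at least one endpoint is matched in $M$ by an edge at least as heavy as $\{u,v\}$, and as \emph{unstable} otherwise; with weight-induced preferences, ``not dominated'' is exactly ``blocking''. Charging every dominated $M^*$-edge to an incident $M$-edge of no smaller weight, and noting that each $M$-edge absorbs at most one charge from each of its two endpoints, the dominated part contributes at most $2\,w(M)$. Hence $w(M^*) \le 2\,w(M) + W_U$, where $W_U$ is the total weight of the $M^*$-edges that are unstable relative to $M$, and it remains to prove $W_U \le \epsilon\,w(M)$.

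The main obstacle is precisely this weight bound, and here the count bound of Theorem~\ref{thm:almost-stable} is too crude: it only limits the \emph{number} of unstable edges (by $\Delta\epsilon\mysize{M}$), which would force an extra factor of the maximum edge weight. Instead I would reopen the analysis behind Theorem~\ref{thm:almost-stable} and track weights rather than cardinalities. An edge can be unstable only because it sits on a proposal--rejection cascade that has not yet resolved within the $T$ synchronous steps; because the preference order coincides with the weight order, the edge weights encountered along such a cascade are monotone. This lets me charge the weight of each unstable $M^*$-edge against the matched edges along its (length $>T$) cascade, where the telescoping of a monotone weight sequence replaces one factor of $\Delta$ in the original counting argument by an honest sum of weights, so that the cascades of distinct unstable edges overlap a single matched edge only $O(\Delta)$ times rather than $O(\Delta^2)$. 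The upshot is an inequality of the form $T\cdot W_U \le 2\Delta\,w(M)$, which with $T \ge 2\Delta/\epsilon$ yields $W_U \le \epsilon\,w(M)$ and hence the claimed $(2+\epsilon)$-approximation.

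I expect the delicate point to be making the cascade-charging rigorous in the synchronous, round-truncated setting: identifying the correct chain that witnesses each unstable edge, showing that its length must exceed $T$, verifying weight-monotonicity along it, and controlling how many chains can pass through a single matched edge so that the total charge stays $O(\Delta\,w(M))$. Everything else — the scaling of the stability parameter and the dominated-edge charging — is routine once this weighted refinement of the transient-phase analysis is in place.
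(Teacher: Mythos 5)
Your outer reduction is sound and close to the paper's: classifying each $M^*$-edge as either dominated by an incident $M_i$-edge of no smaller weight (contributing at most $2\,w(M_i)$ in total) or blocking, and reducing the theorem to the bound $W_U \le \epsilon\, w(M_i)$ on the total \emph{weight} of blocking pairs, is essentially the case analysis in Section~\ref{sec:weighted-matching}. You have also correctly diagnosed that the cardinality bound of Theorem~\ref{thm:almost-stable} cannot be recycled here, and you have guessed the right shape of the missing inequality, namely $T\cdot W_U = O(\Delta\, w(M_i))$.

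The genuine gap is that this missing inequality is exactly the theorem's content, and your ``cascade-charging'' sketch does not establish it: you defer precisely the points on which it stands or falls (what the witnessing chain is, why it has length exceeding $T$, why its weights are monotone, and why distinct chains overlap a matched edge only $O(\Delta)$ times), and none of these is routine. The paper closes this gap without tracing any per-edge cascade, by a global potential argument that is already set up for arbitrary weights. Define $f_i(r)$ to be the weight of the best remaining candidate of an unmatched red node (and $0$ otherwise). By Lemma~\ref{lem:unstable} every blocking edge is incident to an unmatched red node $r$ with the blue endpoint still in $C_i(r)$, so its weight is at most $f_i(r)$; since $M^*$ is a matching, $W_U \le f_i(R)$. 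Then Lemma~\ref{lem:pot-ub} bounds $f_i(R)$ by the weight of edges lost in round $i$, Lemma~\ref{lem:pot-dec} shows $f_i(R)$ is non-increasing, telescoping gives $(i-1)f_i(R)\le w(L_i)$ (Lemma~\ref{lem:lost-lb}), and $w(L_i)\le(\Delta-1)w_i(B)$ because each blue node loses at most $\Delta-1$ edges, all lighter than its final match (Lemma~\ref{lem:lost-ub}). Combining (Lemma~\ref{lem:pot-ub2} with $\gamma=\epsilon$) yields $W_U\le f_i(R)\le\epsilon\, w_i(B)$ for $i\ge 1+(\Delta-1)/\epsilon$, which is your target inequality. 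So your plan is salvageable, but only by replacing the cascade heuristic with this amortized potential analysis (or an equivalent rigorous argument); as written, the core of the proof is asserted rather than proved.
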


Given access to a bicoloured graph via a {\em preference oracle} (queried with a node, the oracle replies with the colour of the node and the adjacent nodes listed in order of preference), the algorithm in Theorem~\ref{thm:almost-stable} enables estimation of the size of a stable matching using a constant number of queries to the oracle. We analyse the non-degenerate case $\Delta \ge 3$.
\begin{theorem}\label{thm:size-estimate}
For any\/ $0<\delta\leq 1/2$, $0 < \epsilon \le 1$, and $\Delta \ge 3$, there exists a randomised algorithm that, given access to a preference oracle for a bicoloured graph\/ $\myG$ without isolated nodes, makes at most $25000 \epsilon^{-2} (\Delta-1)^{3 + 4\Delta/\epsilon}\ln\delta^{-1}$ queries to the oracle and outputs with probability at least\/ $1-\delta$ an estimate $\hat{m}$ such that\/ $\bigl|\hat m-\mysize{M}\bigr| \le \epsilon\mysize{M}$ where $M$ is a stable matching in\/~$\myG$.
\end{theorem}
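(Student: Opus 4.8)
The plan is to combine a purely combinatorial fact---that an almost stable matching has almost the same size as a genuine stable matching---with a standard sampling estimator built on top of the local algorithm of Theorem~\ref{thm:almost-stable}. Recall that all stable matchings of $\myG$ match exactly the same set of nodes, so $\mysize{M}$ is well defined. First I would fix a stability parameter $\epsilon_0 = \Theta(\epsilon)$ and let $\myA$ denote the algorithm of Theorem~\ref{thm:almost-stable} run so as to produce an $\epsilon_0$-stable matching $M'$. The backbone of the argument is the following estimate, which I would isolate as a separate lemma: if $M$ is stable and $M'$ is any matching with exactly $U$ edges that are unstable relative to $M'$, then $\bigl|\mysize{M}-\mysize{M'}\bigr|\le U$. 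Since $U\le\epsilon_0\mysize{M'}$ for our $M'$, choosing $\epsilon_0$ small enough yields $\bigl|\mysize{M'}-\mysize{M}\bigr|\le \tfrac{\epsilon}{2}\mysize{M}$.

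To prove the lemma I would examine the symmetric difference $M\triangle M'$, which decomposes into alternating paths and even alternating cycles. Cycles and even-length paths contribute equally many edges to $M$ and to $M'$, so $\mysize{M}-\mysize{M'}$ equals the number of odd paths favouring $M$ minus the number favouring $M'$. The key claim is that every such odd path contains at least one edge that is unstable relative to $M'$. I would prove this by walking along the path $v_0,v_1,v_2,\dots$ and assuming, for contradiction, that none of its edges lying outside $M'$ is unstable: invoking stability of $M$ on the $M'$-edges and the ``not unstable relative to $M'$'' hypothesis on the remaining edges alternately forces the preference chain ``$v_i$ prefers $v_{i+1}$ over $v_{i-1}$'' all the way to the far endpoint, where it contradicts that endpoint being unmatched. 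Because the odd paths are vertex-disjoint their witnessing unstable edges are distinct, so the number of odd paths of either type is at most $U$, and the lemma follows.

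Next I would build the estimator. Because $\myA$ is local, whether a node $v$ is matched in $M'$ is determined by the ball $B(v,r)$ of some fixed radius $r$; crucially, $r$ is governed by the number of propose--accept rounds, which is of order $\Delta/\epsilon$, rather than by the full synchronous running time $T$ of Theorem~\ref{thm:almost-stable}---this is exactly what keeps the exponent linear in $\Delta$. Assuming, as is standard in this oracle model, that we may draw a uniformly random node and know $n$, I would sample $s$ nodes, simulate $\myA$ on each sampled ball by querying the oracle in breadth-first order to depth $r$, let $X_i$ indicate whether the $i$-th node is matched, and output $\hat m=\tfrac{n}{2s}\sum_i X_i$, an unbiased estimate of $\mysize{M'}$. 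A Chernoff/Hoeffding bound shows that $s=\Theta(\epsilon^{-2}\Delta^2\ln\delta^{-1})$ samples estimate the matched fraction $2\mysize{M'}/n$ to within relative error $\tfrac{\epsilon}{2}$ with probability $1-\delta$; here I would use that $\myG$ has no isolated nodes and $M$ is maximal, so $M$ matches at least a $1/(\Delta+1)$ fraction of the nodes, and $\mysize{M'}\ge(1-\epsilon_0)\mysize{M}$, making the matched fraction $\Omega(1/\Delta)$ and thereby converting the additive guarantee into the required relative error at the cost of only a $\mathrm{poly}(\Delta)$ factor in $s$. Combining the two error terms gives $\bigl|\hat m-\mysize{M}\bigr|\le\epsilon\mysize{M}$. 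Finally I would count queries: each ball has at most $(\Delta-1)^{r+O(1)}$ nodes, and multiplying by $s$ and absorbing the polynomial-in-$\Delta$ factors into the exponent collapses to the claimed bound $25000\,\epsilon^{-2}(\Delta-1)^{3+4\Delta/\epsilon}\ln\delta^{-1}$.

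The main obstacle is the combinatorial lemma: the definition of $\epsilon$-stability controls only the \emph{number} of unstable edges, and a priori an almost stable matching could be far in size from a stable one, so the whole reduction hinges on showing that few unstable edges forces few short augmenting (or reducing) paths. The preference-chain argument is the delicate part, since one must track which matching's stability is invoked at each step and verify that the witnessing edge is genuinely unstable relative to $M'$ rather than merely relative to $M$. A secondary, quantitative hurdle is pinning the dependency radius to order $\Delta/\epsilon$---identifying it with the number of propose--accept rounds rather than with $T$---since only this keeps the exponent at $4\Delta/\epsilon$ instead of $4\Delta^2/\epsilon$.
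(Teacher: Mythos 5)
Your combinatorial lemma is correct and is in fact a more general statement than anything the paper proves: the symmetric-difference argument showing $\bigl|\mysize{M}-\mysize{M'}\bigr|\le U$ for \emph{any} matching $M'$ with $U$ blocking pairs is sound (for strict preferences), and the preference-chain propagation along odd alternating paths works exactly as you describe. The sampling half is also fine in outline. But there is a genuine quantitative gap in the middle, and it is precisely the point you flag as a ``secondary hurdle'' and then resolve incorrectly. To make your lemma useful you need $U\le\epsilon_0\mysize{M'}$ with $\epsilon_0=\Theta(\epsilon)$, i.e.\ you need the truncated Gale--Shapley run to be genuinely $\Theta(\epsilon)$-stable. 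By the paper's own analysis (proof of Theorem~\ref{thm:almost-stable}), the only available bound on the number of unstable edges is $u_i\le\Delta f_i(R)$ together with $f_i(R)\le\gamma\mysize{M_i}$ for $i\ge 1+(\Delta-1)/\gamma$; forcing $u_i\le\epsilon_0\mysize{M_i}$ requires $\gamma=\epsilon_0/\Delta$ and hence $i=\Theta(\Delta^2/\epsilon)$ \emph{rounds}, not $\Theta(\Delta/\epsilon)$. Your claim that the dependency radius is ``governed by the number of propose--accept rounds, which is of order $\Delta/\epsilon$'' conflates two different guarantees: $\Theta(\Delta/\epsilon)$ rounds suffice only for the weaker bound $f_i(R)\le O(\epsilon)\mysize{M_i}$, not for $\epsilon$-stability. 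With your route the ball radius is $\Theta(\Delta^2/\epsilon)$ and the query count becomes $(\Delta-1)^{\Theta(\Delta^2/\epsilon)}$, which does not match the claimed $(\Delta-1)^{3+4\Delta/\epsilon}$.

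The paper escapes this by never passing through the number of unstable edges at all. It compares $M_j$ directly to the stable matching $M_\infty$ produced by continuing the \emph{same} run, using two run-specific monotonicity facts: blue nodes, once matched, stay matched (so $\mysize{M_j}\le\mysize{M_\infty}$), and a red node whose candidate list is empty stays unmatched forever (so $\mysize{M_\infty}\le\mysize{M_j}+f_j(R)$). This bounds the size gap by $f_j(R)$ itself, skipping the lossy factor-$\Delta$ conversion from ``unmatched red nodes with nonempty lists'' to ``unstable edges'', and therefore needs only $j=\Theta(\Delta/\epsilon)$ rounds. To repair your write-up you would either have to adopt this potential-based comparison (abandoning the blocking-pair count as the intermediary), or prove a stronger stability bound for truncated Gale--Shapley than the paper provides; as written, the claimed exponent $4\Delta/\epsilon$ is not reachable from your premises.
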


Here it should be stressed that the algorithm in Theorem~\ref{thm:size-estimate} estimates the size of a stable matching, not just an $\epsilon$-stable matching.

\subsection{Overview}

The rest of the paper is structured as follows. In Section~\ref{sec:motivation} we discuss the specific choices that we make in this work, in particular, why and how an almost stable matching differs from a stable matching in a distributed setting. Section~\ref{sec:related} reviews related work. Sections \ref{sec:algorithm} and \ref{sec:analysis} present and analyse the distributed, truncated version of the Gale--Shapley algorithm. Section~\ref{sec:applications} applies this algorithm to prove Theorems \ref{thm:almost-stable}--\ref{thm:size-estimate}.

\section{Motivation and Implications}\label{sec:motivation}

\subsection{Local Algorithms}\label{sec:local}

The running time $T$ of the algorithms presented in this work only depends on the degree bound $\Delta$ and the desired stability or approximation guarantee~$\epsilon$. For a constant $\Delta$ and $\epsilon$, these are constant-time algorithms; the running time is independent of the number of nodes in the network, the diameter of the network, or other global properties. Such constant-time distributed algorithms are known as local algorithms \cite{linial92locality,naor95what,suomela08survey}.

Besides the practical advantage of a low time complexity, our positive results also provide insight into the structure of the stable marriage problem. We mention three interpretations of Theorem~\ref{thm:almost-stable}, from different perspectives; these all follow from the observation that in $T$ synchronous communication steps, information can only be propagated from distance $T$ in the network.
\begin{itemize}
    \newcommand{\myitem}[1]{\item \emph{#1.}}
    \myitem{Robustness and dynamic graphs} The $\epsilon$-stable matching $M$ from Theorem~\ref{thm:almost-stable} is robust in a dynamic network. A change in the network only affects $M$ in the radius-$T$ neighbourhood around the point of change.
    \myitem{Available information} Information within the radius-$T$ neighbourhood of an edge $e$ is sufficient to determine whether $e$ is part of the globally consistent $\epsilon$-stable matching $M$.
    \myitem{Parallelism and circuit complexity} For any graph, we can construct a bounded-fan-in Boolean circuit that maps the matching preferences to an $\epsilon$-stable matching. The depth of the circuit only depends on the constants $\Delta$ and $\epsilon$, not on the size of the graph.
\end{itemize}
As we will discuss in the next section, these properties highlight a substantial difference between stable and almost stable matchings.

\subsection{Almost Stable vs.\ Stable}\label{sec:almost}

While an almost stable matching is robust to change in the preferences, this is not the case for a stable matching. For example, consider the following graph where the numbered edge ends indicate preference rankings (the most preferred match has rank 1).
\begin{center}
    \begin{picture}(0,0)%
\includegraphics{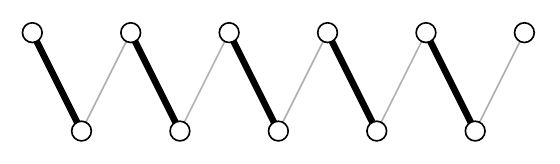}%
\end{picture}%
\setlength{\unitlength}{4144sp}%
\begingroup\makeatletter\ifx\SetFigFontNFSS\undefined%
\gdef\SetFigFontNFSS#1#2#3#4#5{%
  \reset@font\fontsize{#1}{#2pt}%
  \fontfamily{#3}\fontseries{#4}\fontshape{#5}%
  \selectfont}%
\fi\endgroup%
\begin{picture}(2544,744)(-146,242)
\put(361,749){\makebox(0,0)[rb]{\smash{{\SetFigFontNFSS{8}{9.6}{\familydefault}{\mddefault}{\updefault}{\color[rgb]{0,0,0}$1$}%
}}}}
\put( 91,749){\makebox(0,0)[lb]{\smash{{\SetFigFontNFSS{8}{9.6}{\familydefault}{\mddefault}{\updefault}{\color[rgb]{0,0,0}$1$}%
}}}}
\put(541,749){\makebox(0,0)[lb]{\smash{{\SetFigFontNFSS{8}{9.6}{\familydefault}{\mddefault}{\updefault}{\color[rgb]{0,0,0}$2$}%
}}}}
\put(811,749){\makebox(0,0)[rb]{\smash{{\SetFigFontNFSS{8}{9.6}{\familydefault}{\mddefault}{\updefault}{\color[rgb]{0,0,0}$1$}%
}}}}
\put(991,749){\makebox(0,0)[lb]{\smash{{\SetFigFontNFSS{8}{9.6}{\familydefault}{\mddefault}{\updefault}{\color[rgb]{0,0,0}$2$}%
}}}}
\put(1261,749){\makebox(0,0)[rb]{\smash{{\SetFigFontNFSS{8}{9.6}{\familydefault}{\mddefault}{\updefault}{\color[rgb]{0,0,0}$1$}%
}}}}
\put(1711,749){\makebox(0,0)[rb]{\smash{{\SetFigFontNFSS{8}{9.6}{\familydefault}{\mddefault}{\updefault}{\color[rgb]{0,0,0}$1$}%
}}}}
\put(1441,749){\makebox(0,0)[lb]{\smash{{\SetFigFontNFSS{8}{9.6}{\familydefault}{\mddefault}{\updefault}{\color[rgb]{0,0,0}$2$}%
}}}}
\put(1891,749){\makebox(0,0)[lb]{\smash{{\SetFigFontNFSS{8}{9.6}{\familydefault}{\mddefault}{\updefault}{\color[rgb]{0,0,0}$2$}%
}}}}
\put(2161,749){\makebox(0,0)[rb]{\smash{{\SetFigFontNFSS{8}{9.6}{\familydefault}{\mddefault}{\updefault}{\color[rgb]{0,0,0}$1$}%
}}}}
\put(2116,389){\makebox(0,0)[lb]{\smash{{\SetFigFontNFSS{8}{9.6}{\familydefault}{\mddefault}{\updefault}{\color[rgb]{0,0,0}$2$}%
}}}}
\put(1936,389){\makebox(0,0)[rb]{\smash{{\SetFigFontNFSS{8}{9.6}{\familydefault}{\mddefault}{\updefault}{\color[rgb]{0,0,0}$1$}%
}}}}
\put(1666,389){\makebox(0,0)[lb]{\smash{{\SetFigFontNFSS{8}{9.6}{\familydefault}{\mddefault}{\updefault}{\color[rgb]{0,0,0}$2$}%
}}}}
\put(1216,389){\makebox(0,0)[lb]{\smash{{\SetFigFontNFSS{8}{9.6}{\familydefault}{\mddefault}{\updefault}{\color[rgb]{0,0,0}$2$}%
}}}}
\put(1486,389){\makebox(0,0)[rb]{\smash{{\SetFigFontNFSS{8}{9.6}{\familydefault}{\mddefault}{\updefault}{\color[rgb]{0,0,0}$1$}%
}}}}
\put(766,389){\makebox(0,0)[lb]{\smash{{\SetFigFontNFSS{8}{9.6}{\familydefault}{\mddefault}{\updefault}{\color[rgb]{0,0,0}$2$}%
}}}}
\put(766,389){\makebox(0,0)[lb]{\smash{{\SetFigFontNFSS{8}{9.6}{\familydefault}{\mddefault}{\updefault}{\color[rgb]{0,0,0}$2$}%
}}}}
\put(1036,389){\makebox(0,0)[rb]{\smash{{\SetFigFontNFSS{8}{9.6}{\familydefault}{\mddefault}{\updefault}{\color[rgb]{0,0,0}$1$}%
}}}}
\put(136,389){\makebox(0,0)[rb]{\smash{{\SetFigFontNFSS{8}{9.6}{\familydefault}{\mddefault}{\updefault}{\color[rgb]{0,0,0}\textbf{1}}%
}}}}
\put(586,389){\makebox(0,0)[rb]{\smash{{\SetFigFontNFSS{8}{9.6}{\familydefault}{\mddefault}{\updefault}{\color[rgb]{0,0,0}$1$}%
}}}}
\put(316,389){\makebox(0,0)[lb]{\smash{{\SetFigFontNFSS{8}{9.6}{\familydefault}{\mddefault}{\updefault}{\color[rgb]{0,0,0}\textbf{2}}%
}}}}
\end{picture}%

\end{center}
Now transpose the preferences shown in boldface to obtain a graph whose unique stable matching is edge-disjoint from the unique stable matching in the original graph.
\begin{center}
    \begin{picture}(0,0)%
\includegraphics{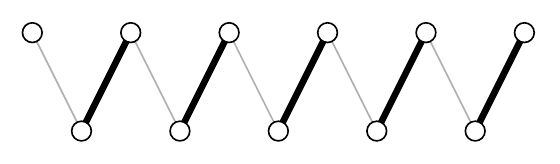}%
\end{picture}%
\setlength{\unitlength}{4144sp}%
\begingroup\makeatletter\ifx\SetFigFontNFSS\undefined%
\gdef\SetFigFontNFSS#1#2#3#4#5{%
  \reset@font\fontsize{#1}{#2pt}%
  \fontfamily{#3}\fontseries{#4}\fontshape{#5}%
  \selectfont}%
\fi\endgroup%
\begin{picture}(2544,744)(-146,242)
\put(586,389){\makebox(0,0)[rb]{\smash{{\SetFigFontNFSS{8}{9.6}{\familydefault}{\mddefault}{\updefault}{\color[rgb]{0.690,0.690,0.690}$1$}%
}}}}
\put(766,389){\makebox(0,0)[lb]{\smash{{\SetFigFontNFSS{8}{9.6}{\familydefault}{\mddefault}{\updefault}{\color[rgb]{0.690,0.690,0.690}$2$}%
}}}}
\put(766,389){\makebox(0,0)[lb]{\smash{{\SetFigFontNFSS{8}{9.6}{\familydefault}{\mddefault}{\updefault}{\color[rgb]{0.690,0.690,0.690}$2$}%
}}}}
\put(1036,389){\makebox(0,0)[rb]{\smash{{\SetFigFontNFSS{8}{9.6}{\familydefault}{\mddefault}{\updefault}{\color[rgb]{0.690,0.690,0.690}$1$}%
}}}}
\put(1216,389){\makebox(0,0)[lb]{\smash{{\SetFigFontNFSS{8}{9.6}{\familydefault}{\mddefault}{\updefault}{\color[rgb]{0.690,0.690,0.690}$2$}%
}}}}
\put(1486,389){\makebox(0,0)[rb]{\smash{{\SetFigFontNFSS{8}{9.6}{\familydefault}{\mddefault}{\updefault}{\color[rgb]{0.690,0.690,0.690}$1$}%
}}}}
\put(1936,389){\makebox(0,0)[rb]{\smash{{\SetFigFontNFSS{8}{9.6}{\familydefault}{\mddefault}{\updefault}{\color[rgb]{0.690,0.690,0.690}$1$}%
}}}}
\put(1666,389){\makebox(0,0)[lb]{\smash{{\SetFigFontNFSS{8}{9.6}{\familydefault}{\mddefault}{\updefault}{\color[rgb]{0.690,0.690,0.690}$2$}%
}}}}
\put(2116,389){\makebox(0,0)[lb]{\smash{{\SetFigFontNFSS{8}{9.6}{\familydefault}{\mddefault}{\updefault}{\color[rgb]{0.690,0.690,0.690}$2$}%
}}}}
\put(1891,749){\makebox(0,0)[lb]{\smash{{\SetFigFontNFSS{8}{9.6}{\familydefault}{\mddefault}{\updefault}{\color[rgb]{0.690,0.690,0.690}$2$}%
}}}}
\put(2161,749){\makebox(0,0)[rb]{\smash{{\SetFigFontNFSS{8}{9.6}{\familydefault}{\mddefault}{\updefault}{\color[rgb]{0.690,0.690,0.690}$1$}%
}}}}
\put(1711,749){\makebox(0,0)[rb]{\smash{{\SetFigFontNFSS{8}{9.6}{\familydefault}{\mddefault}{\updefault}{\color[rgb]{0.690,0.690,0.690}$1$}%
}}}}
\put(1441,749){\makebox(0,0)[lb]{\smash{{\SetFigFontNFSS{8}{9.6}{\familydefault}{\mddefault}{\updefault}{\color[rgb]{0.690,0.690,0.690}$2$}%
}}}}
\put(991,749){\makebox(0,0)[lb]{\smash{{\SetFigFontNFSS{8}{9.6}{\familydefault}{\mddefault}{\updefault}{\color[rgb]{0.690,0.690,0.690}$2$}%
}}}}
\put(1261,749){\makebox(0,0)[rb]{\smash{{\SetFigFontNFSS{8}{9.6}{\familydefault}{\mddefault}{\updefault}{\color[rgb]{0.690,0.690,0.690}$1$}%
}}}}
\put(541,749){\makebox(0,0)[lb]{\smash{{\SetFigFontNFSS{8}{9.6}{\familydefault}{\mddefault}{\updefault}{\color[rgb]{0.690,0.690,0.690}$2$}%
}}}}
\put(811,749){\makebox(0,0)[rb]{\smash{{\SetFigFontNFSS{8}{9.6}{\familydefault}{\mddefault}{\updefault}{\color[rgb]{0.690,0.690,0.690}$1$}%
}}}}
\put(361,749){\makebox(0,0)[rb]{\smash{{\SetFigFontNFSS{8}{9.6}{\familydefault}{\mddefault}{\updefault}{\color[rgb]{0.690,0.690,0.690}$1$}%
}}}}
\put( 91,749){\makebox(0,0)[lb]{\smash{{\SetFigFontNFSS{8}{9.6}{\familydefault}{\mddefault}{\updefault}{\color[rgb]{0.690,0.690,0.690}$1$}%
}}}}
\put(136,389){\makebox(0,0)[rb]{\smash{{\SetFigFontNFSS{8}{9.6}{\familydefault}{\mddefault}{\updefault}{\color[rgb]{0,0,0}\textbf{2}}%
}}}}
\put(316,389){\makebox(0,0)[lb]{\smash{{\SetFigFontNFSS{8}{9.6}{\familydefault}{\mddefault}{\updefault}{\color[rgb]{0,0,0}\textbf{1}}%
}}}}
\end{picture}%

\end{center}
Thus, a single transposition in preferences can force every match to break~up.

From the perspective of distributed computation, this example shows that every node must have essentially complete information about the network to arrive at a stable matching. In particular, if each node initially knows only its neighbours and its own preferences, the number of communication steps required between neighbours is linear in the diameter of the network.

Gathering global information is hardly practical in a very large-scale distributed system, especially if the network undergoes frequent changes. The ease with which almost stable matchings can be computed in a distributed setting comes as a positive surprise.

\subsection{Definition of Almost Stable}\label{sec:def}

Our results are somewhat oblivious to the exact definition of an ``almost stable'' matching. The number of unstable edges (blocking pairs) appears to be generally accepted as a basic measure of instability \cite{abraham06almost-stable,biro08size,eriksson08instability,khuller94on-line}, but one could equally well consider other measures. For example, one can consider the number of nodes that are endpoints of unstable edges.

Naturally we must measure stability in relation to some other quantity; an absolute guarantee of stability cannot be achieved in constant time. We have chosen to measure the number of unstable edges relative to $\mysize{M}$, as this gives us the strongest results when compared with other quantities such as $\mysize{R}$, $\mysize{B}$, $\mysize{R \cup B}$, $\mysize{E}$, and $\mysize{R}\mysize{B}$, each of which is at least as large as $\mysize{M}$. Eriksson and Häggström \cite{eriksson08instability} provide arguments in favour of comparing unstable edges to $\mysize{E}$.

\subsection{Marriages vs.\ Roommates}\label{sec:bipartite}

An immediate question is whether our results apply to the stable roommates problem -- the non-bipartite version of the stable marriage problem. It is easy to see that a non-bipartite graph need not have an $\epsilon$-stable matching for $\epsilon<1$: consider a triangle with ``rotational-symmetric'' preferences -- there is one unstable edge to every matching edge.
\begin{center}
    \begin{picture}(0,0)%
\includegraphics{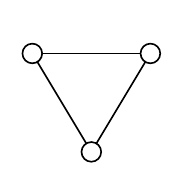}%
\end{picture}%
\setlength{\unitlength}{4144sp}%
\begingroup\makeatletter\ifx\SetFigFontNFSS\undefined%
\gdef\SetFigFontNFSS#1#2#3#4#5{%
  \reset@font\fontsize{#1}{#2pt}%
  \fontfamily{#3}\fontseries{#4}\fontshape{#5}%
  \selectfont}%
\fi\endgroup%
\begin{picture}(834,834)(-146,242)
\put(136,389){\makebox(0,0)[rb]{\smash{{\SetFigFontNFSS{8}{9.6}{\familydefault}{\mddefault}{\updefault}{\color[rgb]{0,0,0}$1$}%
}}}}
\put( 91,884){\makebox(0,0)[lb]{\smash{{\SetFigFontNFSS{8}{9.6}{\familydefault}{\mddefault}{\updefault}{\color[rgb]{0,0,0}$1$}%
}}}}
\put(451,884){\makebox(0,0)[rb]{\smash{{\SetFigFontNFSS{8}{9.6}{\familydefault}{\mddefault}{\updefault}{\color[rgb]{0,0,0}$2$}%
}}}}
\put(-44,659){\makebox(0,0)[rb]{\smash{{\SetFigFontNFSS{8}{9.6}{\familydefault}{\mddefault}{\updefault}{\color[rgb]{0,0,0}$2$}%
}}}}
\put(406,389){\makebox(0,0)[lb]{\smash{{\SetFigFontNFSS{8}{9.6}{\familydefault}{\mddefault}{\updefault}{\color[rgb]{0,0,0}$2$}%
}}}}
\put(586,659){\makebox(0,0)[lb]{\smash{{\SetFigFontNFSS{8}{9.6}{\familydefault}{\mddefault}{\updefault}{\color[rgb]{0,0,0}$1$}%
}}}}
\end{picture}%

\end{center}

\subsection{Bicoloured vs.\ Bipartite}\label{sec:2coloured}

A graph is $2$-colourable if and only if it is bipartite. Given a global view of a bipartite graph, it is trivial to $2$-colour the graph. In the context of distributed algorithms this relationship is more subtle, however. Linial \cite{linial92locality} shows that there is no constant-time algorithm for finding a maximal matching in a cycle with $2n$ with vertices. Czygrinow \myetal{} \cite{czygrinow08fast} show that finding a constant-factor approximation to maximum-cardinality matching with a deterministic constant-time algorithm is not possible either. However, these negative results heavily rely on the fact that the nodes in the bipartite graph do not know their colours in a 2-colouring. Theorem~\ref{thm:weighted-matching} shows that knowledge of the colour is not only necessary but also sufficient to find a constant-factor approximation of maximum-weight matching with a deterministic distributed constant-time algorithm.

The approximation factor of $2+\epsilon$ in Theorem~\ref{thm:weighted-matching} is almost the best possible for {\em any} algorithm (distributed or centralised) that has access only to the relative order of the weights, and not their numerical values. To see that no algorithm can have the approximation factor $2-\epsilon$ for any $\epsilon>0$, consider the following graph with the weight $\alpha > 1$.
\begin{center}
    \begin{picture}(0,0)%
\includegraphics{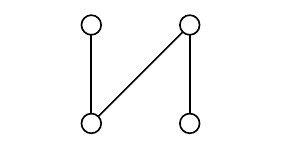}%
\end{picture}%
\setlength{\unitlength}{4144sp}%
\begingroup\makeatletter\ifx\SetFigFontNFSS\undefined%
\gdef\SetFigFontNFSS#1#2#3#4#5{%
  \reset@font\fontsize{#1}{#2pt}%
  \fontfamily{#3}\fontseries{#4}\fontshape{#5}%
  \selectfont}%
\fi\endgroup%
\begin{picture}(1284,654)(-416,287)
\put(-44,569){\makebox(0,0)[rb]{\smash{{\SetFigFontNFSS{8}{9.6}{\familydefault}{\mddefault}{\updefault}{\color[rgb]{0,0,0}$1$}%
}}}}
\put(496,569){\makebox(0,0)[lb]{\smash{{\SetFigFontNFSS{8}{9.6}{\familydefault}{\mddefault}{\updefault}{\color[rgb]{0,0,0}$1$}%
}}}}
\put(271,704){\makebox(0,0)[rb]{\smash{{\SetFigFontNFSS{8}{9.6}{\familydefault}{\mddefault}{\updefault}{\color[rgb]{0,0,0}$\alpha$}%
}}}}
\end{picture}%

\end{center}
If $\alpha \gg 1$, an algorithm must include the edge with weight $\alpha$ into the output (otherwise the algorithm has no approximation guarantee at all). But then the algorithm, having access only to the relative order of the weights, must include this edge also when $1 < \alpha < 2/{(2-\epsilon)}$, which contradicts with the approximation guarantee $2-\epsilon$.

\section{Related Work}\label{sec:related}

The stable marriage problem remains a subject of active research. The book by Gusfield and Irving \cite{gusfield89stable} is a comprehensive survey on the problem; see the MATCH-UP workshop \cite{halldorsson08match-up} for the latest developments.

\subsection{Almost Stable Matching}

Abraham \myetal{} \cite{abraham06almost-stable} study almost stable matchings in the stable roommates problem. The recent work by Biró \myetal{} \cite{biro08size} is particularly close to ours: they, too, consider the stable marriage problem with incomplete preference lists, and aim at finding a matching with few unstable edges. However, in terms of computational complexity, their work goes in the opposite direction.  Their task is to find a \emph{maximum} matching that minimises the number of unstable edges. It turns out that this makes the problem computationally much more \emph{difficult}: the problem is NP-hard, unlike the classical stable marriage problem. In contrast, we do not require that the matching is a maximum matching, which makes the problem computationally \emph{easier}: the problem admits a constant-time distributed algorithm, unlike the classical stable marriage problem. The algorithm works even when ties in the preferences lists are allowed; this should be contrasted with the fact that if ties are allowed, it is NP-hard to find a stable perfect matching \cite{halldorsson03approximability,irving08stable,iwama99stable}.

\subsection{Distributed Stable Matching}

Everyone witnesses evolution of matchings in real-world and virtual-world social networks; in its simplest form, the matchings attempt to attain stability by switching along unstable edges. In a seminal work, Knuth \cite{knuth76mariages} showed that such a switching may go in cycles, never resulting in a stable matching. Roth and Vande Vate \cite{roth90random} showed that switching {\em randomly} almost surely leads to stability.

Note that switching partners along unstable edges can be done in a distributed manner. The Gale--Shapley algorithm is also parallel by its nature: the proposals/rejects can be undertaken by all men/women simultaneously during synchronised {\em rounds} (albeit it can happen that only one man is free at a round \cite[Section~A.3]{gusfield89stable},\cite{tseng84parallel}). Lower bounds on the running time of the algorithm \cite[Section~1.5]{gusfield89stable} show that a linear number of rounds is required to attain stability. But can a {\em nearly} stable matching be obtained with fewer rounds?

Several works have addressed the last question with experiments. Quinn \cite{quinn85note} observes experimentally that a matching with only a fraction of unstable edges emerges long before the Gale--Shapley algorithm converges. Lu and Zheng \cite{lu03parallel} propose a parallel algorithm that outperforms the Gale--Shapley algorithm in practice. Theorem~\ref{thm:almost-stable} gives theoretical support to the findings in Quinn \cite{quinn85note}. Theorem~\ref{thm:almost-stable} also addresses the concern expressed in the conclusions of Lu and Zheng \cite{lu03parallel} where it is claimed that ``Most of existing parallel stable matching algorithms cannot guarantee a matching with a small number of unstable pairs within a given time interval.'' Theorem~\ref{thm:almost-stable} suggests that if the number of acceptable partners for each participant is bounded, the Gale--Shapley algorithm guarantees a small {\em relative} number of unstable edges.

From a theory perspective, apparently only a few papers address decentralised implementations of the Gale--Shapley algorithm and/or stability after early termination of a stable matching algorithm. In a recent paper \cite{eriksson08instability} it is claimed that ``little theory exists concerning instability.'' Khuller et al. \cite{khuller94on-line} give bounds on the performance of a simple online algorithm. Feder \myetal{} \cite{feder00sublinear} show that a stable matching can be found on a polynomial number of processors in sublinear time; their algorithm is not local. Other work on parallel stable matching include Tseng and Lee \cite{tseng84parallel}, Tseng \cite{tseng89average}, and Hull \cite{hull84parallel}. Eriksson and H\"{a}ggstr\"{o}m \cite{eriksson08instability} prove that a simple heuristic works well for {\em random} inputs. Our Theorem~\ref{thm:almost-stable} shows that the Gale--Shapley algorithm works well for an arbitrary input.

\subsection{Local Algorithms and Matchings}

As we mentioned in Section~\ref{sec:2coloured}, there is a range of negative results related to local algorithms (constant-time distributed algorithms) for maximal matching \cite{linial92locality} and approximate maximum matching \cite{czygrinow08fast,kuhn05price,kuhn06price,moscibroda06locality}. Even if each node is assigned a unique identifier and the network topology is an $n$-cycle, it is not possible to break the symmetry in the network and find a constant-factor approximation for maximum matching. Without any auxiliary information beyond unique node identifiers, positive results are known only in rare special cases, most notably for graphs where each node has an odd degree \cite{mayer95local,naor95what}.

Hence some auxiliary information is needed. For example, in problems related to stable marriage, it is natural to assume that every participant knows his/her gender. This assumption has not been exploited much in prior work on distributed, deterministic constant-time algorithms. We are only aware of Ha\'{n}\'{c}kowiak \myetal{} \cite{hanckowiak98distributed}, which shows that there is a constant-time algorithm for maximal matching in bicoloured bounded-degree graphs. Similarly to our algorithm, their algorithm does not need unique node identifiers; \emph{port numbering} \cite{angluin80local} is sufficient.

Other work on constant-time distributed algorithms for matching usually assumes either randomness \cite{hoepman06efficient,kuhn05price,kuhn06price,nguyen08constant-time,wattenhofer04distributed} or geometric information \cite{hassinen08analysing,wiese08local-matching}. We refer to the survey \cite{suomela08survey} for further information on local algorithms.

\subsection{Centralised Constant-Time Algorithms and Matchings}

Our centralised constant-time approximation algorithm in Theorem~\ref{thm:size-estimate} is based on the ideas of Parnas and Ron \cite{parnas07approximating} and Nguyen and Onak \cite{nguyen08constant-time}. Their work presents constant-time approximation algorithms for estimating the size of a maximal matching, maximum-cardinality matching, and maximum-weight matching. Our work complements this line of research by presenting an algorithm for estimating the size of a stable matching.

\section{Algorithm}\label{sec:algorithm}

We work with the following distributed variant of the Gale--Shapley algorithm. Each blue node $b \in B$ maintains the variable $p(b)$ which is the match of $b$ in the current matching or $\bot$ ($b$ is unmatched). Each red node $r \in R$ maintains the following variables: $C(r)$ is the list of candidates for matching; $c(r)$ is the candidate from whom $r$ is waiting for a response or $\bot$ ($r$ has no current candidate); and $p(r)$ is the match of $r$ in the current matching or $\bot$ ($r$ is unmatched).

The algorithm executes in {\em rounds}. Each round consists of two {\em turns}, a {\em blue turn} followed by a {\em red turn}. Each node $v\in R \cup B$ is active during turns of its colour.

\paragraph{Blue turn.}
Each blue node $b\in B$ completes the following read--compute--write cycle.
Initially, $p(b)=\bot$.

\begin{enumerate}
    \item Receive all incoming messages; let $P$ be the subset of neighbours that sent the message `propose'. If $P$ is empty, do nothing during this turn.
    \item If $p(b)\neq\bot$, then set $Q\gets P\cup\{p(b)\}$; otherwise set $Q\gets P$.
    \item Let $q$ be the node in $Q$ most preferred by $b$.
    \item If $q\neq p(b)$:
    \begin{enumerate}
        \item If $p(b) \ne \bot$ then send the message `break' to $p(b)$.
        \item Send the message `accept' to $q$.
        \item Set $p(b) \gets q$.
    \end{enumerate}
    \item For each $r \in P \setminus \{q\}$:
    \begin{enumerate}
        \item Send the message `reject' to $r$.
    \end{enumerate}
\end{enumerate}

\paragraph{Red turn.}
Each red node $r\in R$ completes the following read--compute--write cycle.
Initially, $C(r)$ consists of all adjacent blue nodes, in order of decreasing preference, most preferred first; $c(r) = \bot$; $p(r) = \bot$.

\begin{enumerate}
    \item If $c(r) \ne \bot$:
    \begin{enumerate}
        \item Receive a message $m$ from $c(r)$.
        \item If $m = \text{`accept'}$ then $p(r) \gets c(r)$.
        \item If $m = \text{`reject'}$ then remove $c(r)$ from $C(r)$.
        \item $c(r) \gets \bot$.
    \end{enumerate}
    \item If $p(r) \ne \bot$:
    \begin{enumerate}
        \item Receive a message $m$ from $p(r)$, if any.
        \item If $m = \text{`break'}$ then remove $p(r)$ from $C(r)$, and set $p(r) \gets \bot$.
    \end{enumerate}
    \item If $p(r) = \bot$ and $C(r)$ is not empty:
    \begin{enumerate}
        \item Let $c(r)$ be the first element of $C(r)$.
        \item Send the message `propose' to $c(r)$.
    \end{enumerate}
\end{enumerate}

\section{Analysis}\label{sec:analysis}

We adopt the convention of using a subscript $i=1,2,\ldots$ to denote the state of the algorithm at the end of round $i$. For example, $p_i(r)$ denotes the value of the variable $p(r)$ in the local state of $r$ at the end of round $i$.

At the end of round $i=1,2,\ldots$,
the local state variables $p_i(\cdot)$ define a matching
\[
\begin{split}
M_i&=\{\{r,p_i(r)\}:r\in R,\ p_i(r)\neq\bot\}\\
   &=\{\{b,p_i(b)\}:b\in B,\ p_i(b)\neq\bot\}\subseteq E.
\end{split}
\]

\subsection{Lost edges and convergence to stability}
We start from a restatement, in our terms, of the fact that as the Gale--Shapley algorithm progresses, women only improve their match and men only get worse.
\begin{lemma}
\label{lem:unstable}
An edge $\{r,b\}\in E\setminus M_i$ with $r\in R$ and $b\in B$ is unstable in $M_i$ only if $b$ is present in $C_i(r)$ and $r$ is unmatched at the end of round $i$.
\end{lemma}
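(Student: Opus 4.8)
The plan is to establish the conclusion by contraposition on each of its two conjuncts, after first setting up two monotonicity invariants that I would prove by induction on the round number $i$. The first invariant is that \emph{blue nodes only improve}: once a blue node $b$ is matched it stays matched, and the rank of $p_i(b)$ in $b$'s preference order is non-decreasing in $i$. This is immediate from the blue turn, where the new match $q$ is chosen as the most preferred element of $Q=P\cup\{p(b)\}$ and $p(b)\in Q$ whenever $b$ is already matched, so $q$ is at least as good as $b$'s previous match. The second invariant concerns red nodes: \emph{$C_i(r)$ only shrinks}, and \emph{if $p_i(r)=b'\neq\bot$ then $b'$ is the first (most preferred) surviving element of $C_i(r)$}, so every blue node that $r$ strictly prefers over $b'$ has already been deleted. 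This holds because $r$ always proposes to the first element of its decreasingly sorted list, deletions occur only on a `reject' or a `break', deletions are permanent, and an accepted partner is never deleted while the match persists.

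The technical heart is a deletion lemma: \emph{if $b$ is deleted from $C(r)$ during round $j$, then $p_j(b)$ is a partner that $b$ strictly prefers over $r$}. This is also where I expect the main difficulty, namely the bookkeeping of message timing within a round. A deletion happens in the red turn, triggered by a `reject' or a `break' message, but that message was emitted in the blue turn of the \emph{same} round; inspecting that blue turn shows that in either case $b$ chose some $q\neq r$ as the most preferred element of $Q$, and hence ends the blue turn matched to $q$ with $q$ strictly better than $r$. Since the red turn does not modify $p(b)$, we get $p_j(b)=q$. Combining with the first invariant, $p_i(b)$ is at least as good for $b$ as $q$ for every $i\geq j$, hence strictly better than $r$; in particular $b$ is matched at the end of round $i$.

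It remains to chain these facts. For the conjunct $b\in C_i(r)$ I argue the contrapositive: if $b\notin C_i(r)$, then $b$ was deleted at some round $j\leq i$, so by the deletion lemma $b$ is matched and strictly prefers $p_i(b)$ over $r$; thus the stability condition at $b$ fails and $\{r,b\}$ is not unstable. For the conjunct that $r$ is unmatched, suppose instead that $p_i(r)=b'\neq\bot$. Since $\{r,b\}$ is unstable, the stability condition at $r$ forces $r$ to strictly prefer $b$ over $b'$; but then the second invariant implies that $b$ has already been deleted from $C_i(r)$, i.e.\ $b\notin C_i(r)$, contradicting the conjunct just proved. Hence $r$ must be unmatched, which finishes the proof.
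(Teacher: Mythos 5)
Your proposal is correct and follows essentially the same route as the paper's (much terser) proof: your deletion lemma plus the blue-improvement invariant is exactly the paper's ``$b$ rejected $r$ or broke up with $r$ earlier, so $b$ prefers its current match,'' and your invariant that a matched $r$'s partner is the first surviving element of $C_i(r)$ is exactly the paper's ``$r$ has not proposed to $b$ yet, so $r$ prefers its current match.'' You have simply made the inductive bookkeeping explicit; no discrepancy in substance.
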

\begin{proof}
If $b$ is not in $C_i(r)$, then either $b$ rejected $r$ or broke up with $r$ earlier, which means that $b$ prefers its current match to $r$.  If $b$ is in $C_i(r)$ and $r$ is matched, then $r$ has not proposed to $b$ yet, which means that $r$ prefers its current match to~$b$.
\end{proof}

During each round, each red node $r \in R$ removes at most one blue node $b\in B$ from $C(r)$. We say that such an edge $\{r,b\}$ is a \emph{lost edge} because it can no longer occur in the matching. We write $L_i \subseteq E$ for the set of all edges lost by the end of round $i$. Note that $L_1=\emptyset$ because only `propose' messages are sent during the first round.

Because $E$ is a finite set and $L_{i-1}\subseteq L_i\subseteq E$, there exists a $z\in\{2,3,\ldots\}$ such that $L_z=L_{z-1}$. Since received `reject' and `break' messages increase the number of lost edges, it follows that no such messages were received during round $z$. In particular, any unmatched $r\in R$ at the end of round $z$ must have $C_{z-1}(r)$ empty (and hence $C_z(r)$ empty), because otherwise it would have received a `reject' or `break' during round $z$. It follows by Lemma~\ref{lem:unstable} that $M_z$ is a stable matching. Let us denote this stable matching by $M_\infty$ in what follows.

\subsection{Weight and potential}

Associate with each edge $e\in E$ a positive integer weight $w(e)$ such that the weights respect the preferences of each node; that is, whenever a node $v$ prefers $x$ over $y$, it holds that $w(\{v,x\})\geq w(\{v,y\})$. This is always possible, e.g., by choosing $w \equiv 1$. In fact, to prove Theorem~\ref{thm:almost-stable}, taking $w \equiv 1$ is sufficient, while the proof (and actually the statement) of Theorem~\ref{thm:weighted-matching} deals with an arbitrary~$w$.

Associate with each blue node $b\in B$ the \emph{weight} $w_i(b) = w(\{b,p_i(b)\})$ if $b$ is matched; otherwise $w_i(b)=0$. The \emph{total blue weight}, $w_i(B) = \sum_{b \in B} w_i(b)$, is equal to the total weight of the matching $M_i$.

Associate with each red node $r\in R$ a \emph{potential} $f_i(r)$ as follows. If $r$ is matched or $C_i(r)$ is empty, set $f_i(r) = 0$. Otherwise, set $f_i(r) = w(\{r,b\})$ where $b$ is the first element of $C_i(r)$.

Intuitively, the potential $f_i(r)$ is an upper bound for the extra weight that $r$ could have if we ran the algorithm further than $i$ rounds. In a stable matching, $f_i(r) = 0$ for all $r \in R$, but we do not necessarily achieve this by running the algorithm for a constant number of rounds. However, we can derive an upper bound for the \emph{total potential} $f_i(R) = \sum_{r \in R} f_i(r)$ by observing that lost edges are heavier than those along which proposals (will) go.

\begin{lemma}
\label{lem:pot-ub}
For all $i=2,3,\ldots$ it holds that $f_i(R)\leq w(L_i)-w(L_{i-1})$.
\end{lemma}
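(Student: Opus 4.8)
The plan is to charge the potential of each red node to a distinct edge that the node loses during round $i$. Since $L_{i-1}\subseteq L_i$ and $w$ is additive over edges, the right-hand side equals $w(L_i\setminus L_{i-1})$, the total weight of the edges removed from the candidate lists during round $i$. It therefore suffices to exhibit, for every red node $r$ with $f_i(r)>0$, an edge $e_r$ that $r$ loses during round $i$ with $w(e_r)\geq f_i(r)$; as distinct red nodes lose disjoint edges, summing these inequalities gives the bound.

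First I would record the structural invariant that drives the weight comparison: at every point in the execution, if a red node $r$ is matched then its partner is the first (most preferred) entry of $C(r)$, and whenever $r$ has an outstanding proposal it was sent to the first entry of $C(r)$. This holds because $r$ proposes only to the head of its list, becomes matched only by having such a proposal accepted, keeps $c(r)=\bot$ while matched, and removes an element from $C(r)$ only upon a `reject' from the node it just proposed to or a `break' from its current match — in both cases the head element. Because the weights respect the preferences, the head of $C(r)$ is always a heaviest remaining edge at $r$.

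Next I would analyse a red node $r$ with $f_i(r)>0$; by definition $r$ is unmatched at the end of round $i$ and $C_i(r)$ is non-empty, with $f_i(r)=w(\{r,b\})$ for $b$ the head of $C_i(r)$. Since $C$ only shrinks, $C_{i-1}(r)$ is also non-empty, which forces $r$ to have ended round $i-1$ with either an outstanding proposal or a match (an idle node with a non-empty list would have proposed in step~3 of round $i-1$). In the red turn of round $i$, step~1 or step~2 then delivers a `reject' or a `break' from the head element $b_0$ of $C_{i-1}(r)$ — an acceptance would leave $r$ matched and hence give $f_i(r)=0$ — so $b_0$ is removed, i.e.\ $e_r=\{r,b_0\}\in L_i\setminus L_{i-1}$. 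Since $b_0$ was the head of $C_{i-1}(r)$ and $b$ becomes the head once $b_0$ is removed, $r$ prefers $b_0$ to $b$, so $w(e_r)=w(\{r,b_0\})\geq w(\{r,b\})=f_i(r)$.

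The main obstacle is this last case analysis: one must verify that a node carrying positive potential at the end of round $i$ genuinely loses an edge during \emph{that} round, and that the lost edge is the heaviest one at $r$. The two ingredients that make it go through are the invariant that the proposed-to or matched node is always the list head, together with the observation already noted in the text that each red node removes at most one element per round, which guarantees the charging map $r\mapsto e_r$ is injective. With these in hand, summing over all $r$ with $f_i(r)>0$ yields $f_i(R)\leq w(L_i\setminus L_{i-1})=w(L_i)-w(L_{i-1})$.
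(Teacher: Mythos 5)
Your proof is correct and follows essentially the same route as the paper's: charge each red node with positive potential to the unique edge it loses (via a `reject' or `break') during round $i$, note that the lost edge is at least as heavy as the new head of $C(r)$ because the list is ordered by decreasing preference, and sum over the red nodes. You simply make explicit the head-of-list invariant and the case analysis that the paper leaves implicit.
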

\begin{proof}
Each $r \in R$ that has a positive potential at the end of round $i\geq 2$ has either received a `reject' or `break' during the round $i$; in both cases $r$ has removed a unique blue node $b$ from $C(r)$. The weight of the lost edge $\{r,b\}$ is at least the potential $f_i(r)$ because $C(r)$ is ordered by decreasing preference. The claim follows by taking the sum over all red nodes and edges lost during round $i$.
\end{proof}

\begin{lemma}
\label{lem:pot-dec}
For all $i=2,3,\ldots$ it holds that $f_i(R) \leq f_{i-1}(R)$.
\end{lemma}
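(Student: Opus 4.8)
The plan is to pair Lemma~\ref{lem:pot-ub} with a matching inequality in the opposite direction, namely $w(L_i)-w(L_{i-1})\le f_{i-1}(R)$. Since Lemma~\ref{lem:pot-ub} already gives $f_i(R)\le w(L_i)-w(L_{i-1})$, chaining the two bounds yields $f_i(R)\le f_{i-1}(R)$ at once. So the whole task reduces to showing that the total weight of the edges lost during round $i$ does not exceed the total potential present at the end of round $i-1$.

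First I would classify each red node $r$ by its state at the end of round $i-1$. Inspecting the red turn, the variables $p(r)$ and $c(r)$ can never both be non-$\bot$ at the end of a round, so exactly one of three situations holds: $r$ is matched, whence $f_{i-1}(r)=0$; or $r$ is unmatched with a pending proposal, in which case $c_{i-1}(r)$ is the first element $b$ of $C_{i-1}(r)$ and $f_{i-1}(r)=w(\{r,b\})$; or $r$ is unmatched with $C_{i-1}(r)$ empty, whence $f_{i-1}(r)=0$. The proposals processed during round~$i$ are precisely those sent by the nodes of the middle (pending-proposal) type.

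Next I would identify how an edge $\{r,b\}$ can be lost during round $i$ and charge it to a red node whose round-$(i-1)$ potential dominates its weight. If $r$ receives a `reject', then $b=c_{i-1}(r)$ and $w(\{r,b\})=f_{i-1}(r)$, so I charge the edge to $r$ itself. If $r$ receives a `break', then $r$ was matched to $b$ and $f_{i-1}(r)=0$; the break was triggered by a proposer $r'$ that $b$ prefers to its former match $r$ and accepted, and such an $r'$ is of pending-proposal type with $f_{i-1}(r')=w(\{r',b\})$. Since the weights respect $b$'s preferences, $w(\{r,b\})\le w(\{r',b\})=f_{i-1}(r')$, so here I charge the edge to $r'$.

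The crux is to verify that this charging is injective, so that summing gives $w(L_i)-w(L_{i-1})\le\sum_{r\in R}f_{i-1}(r)=f_{i-1}(R)$. The key observations are that a pending-proposal node $r'$ proposes to a single blue node and can be accepted by that node alone, hence absorbs at most one `break'-charge; that an accepted node is not rejected, hence carries no `reject'-charge in addition; and that matched or empty-list nodes are never charged and have zero potential in any case. Establishing this non-collision --- in particular ruling out that a single node absorbs both a reject and a break charge, or two break charges --- is the main obstacle, and the only part that requires genuinely tracing the message flow of a single round; the rest is bookkeeping.
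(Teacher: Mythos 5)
Your proof is correct, but it takes a genuinely different route from the paper's. The paper proves $f_i(R)\le f_{i-1}(R)$ directly, by a message-by-message comparison of potentials during round $i$: a `reject' cannot increase the recipient's potential; a `break' received by $r$ from $b$ raises $f(r)$ from $0$ to at most $w(\{r,b\})$, but this is paid for by the unique accepted proposer $q$, whose potential drops from $f_{i-1}(q)=w(\{q,b\})\ge w(\{r,b\})$ to $0$; isolated `accept's only decrease potential. Summing over messages gives the claim without any reference to lost edges. You instead establish the complementary inequality $w(L_i)-w(L_{i-1})\le f_{i-1}(R)$ by charging each newly lost edge to a red node whose round-$(i-1)$ potential dominates its weight, and then chain with Lemma~\ref{lem:pot-ub}. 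Both arguments hinge on exactly the same two structural facts --- the pairing of each `break' with a unique `accept' at the same blue node, and the monotonicity of $w$ along preferences --- and your injectivity analysis (a pending proposer is charged by at most one blue node, and cannot receive both an `accept' and a `reject'; matched and empty-list nodes are never charged) is sound and covers all cases, since at the end of any round an unmatched red node with nonempty candidate list necessarily has a pending proposal. What your route buys is a tight two-sided sandwich $f_i(R)\le w(L_i)-w(L_{i-1})\le f_{i-1}(R)$, which makes the role of the lost-edge weight explicit; what it costs is the extra indirection through $L_i$ and the need to verify that no red node absorbs two charges, a bookkeeping step the paper's nodewise telescoping avoids.
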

\begin{proof}
Consider the red turn at round $i$. The potential of a red node $r\in R$ changes only if it receives a message. First, if $r$ receives a `reject' message during round $i$, the potential of $r$ at end of round $i$ may change but it does not increase, that is, $f_i(r) \le f_{i-1}(r)$. Second, if $r$ receives a `break' message from a blue node $b$, there is a unique red node $q$ that receives an `accept' message from $b$. In this case the potential of $q$ decreases from $f_{i-1}(q) = w(\{q,b\})$ to $f_i(q)=0$ while the potential of $r$ increases from $f_{i-1}(r) = 0$ to $f_i(r) \le w(\{r,b\}) \le w(\{q,b\}) = f_{i-1}(q)$. Third, if $r$ receives an isolated `accept' message (without an associated `break' message), $f_i(r)=0\leq f_{i-1}(r)$.  The claim follows by taking the sum over all `reject' messages, all `break'--`accept' matches, and all isolated `accept' messages during round~$i$.
\end{proof}

\begin{lemma}
\label{lem:lost-lb}
For all $i=2,3,\ldots$ it holds that $w(L_i) \geq (i-1) f_i(R)$.
\end{lemma}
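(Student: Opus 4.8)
The plan is to combine the per-round lower bound on the growth of the lost-edge weight (Lemma~\ref{lem:pot-ub}) with the monotonicity of the total potential (Lemma~\ref{lem:pot-dec}) through a single telescoping sum. The two facts fit together immediately: one controls how much $w(L_j)$ increases in each round from below, and the other lets us replace every earlier potential by the (smallest) final one.

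First I would rewrite Lemma~\ref{lem:pot-ub} as $w(L_j) - w(L_{j-1}) \geq f_j(R)$ for each $j \in \{2, 3, \ldots, i\}$ and sum these $i-1$ inequalities over $j$. The left-hand side telescopes to $w(L_i) - w(L_1)$; since only `propose' messages are sent during the first round we have $L_1 = \emptyset$ and hence $w(L_1) = 0$, so the left-hand side collapses to $w(L_i)$. This yields the intermediate bound $w(L_i) \geq \sum_{j=2}^{i} f_j(R)$.

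Next I would invoke Lemma~\ref{lem:pot-dec}, which asserts that the total potential is non-increasing in the round index, so $f_j(R) \geq f_i(R)$ for every $j \leq i$. Bounding each of the $i-1$ summands $f_j(R)$ from below by the common value $f_i(R)$ gives $\sum_{j=2}^{i} f_j(R) \geq (i-1)\,f_i(R)$. Chaining this with the intermediate bound established above produces $w(L_i) \geq (i-1)\,f_i(R)$, as required.

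There is no genuine obstacle here; the statement is a direct corollary of the two preceding lemmas, and the argument is essentially arithmetic. The only points requiring care are bookkeeping of the summation limits---the index $j$ ranges over exactly $i-1$ values $2, 3, \ldots, i$, which is precisely what produces the factor $i-1$---and the use of the base case $w(L_1) = 0$ to make the telescoping sum collapse cleanly to $w(L_i)$.
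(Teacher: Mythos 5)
Your argument is correct and is exactly the paper's proof: telescope the per-round bound $w(L_j)-w(L_{j-1})\ge f_j(R)$ from Lemma~\ref{lem:pot-ub} over $j=2,\ldots,i$ using $w(L_1)=0$, then bound each summand below by $f_i(R)$ via the monotonicity in Lemma~\ref{lem:pot-dec}. No differences worth noting.
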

\begin{proof}
By Lemma~\ref{lem:pot-ub} and Lemma~\ref{lem:pot-dec}, we have
\[
    w(L_i)=w(L_i)-w(L_1)=\sum_{j = 2}^i w(L_j)-w(L_{j-1}) \ge \sum_{j = 2}^i f_j(R) \ge (i-1) f_i(R).
\]
\end{proof}

\begin{lemma}
\label{lem:lost-ub}
For all $i=2,3,\ldots$ it holds that $w(L_i) \leq (\Delta-1) w_i(B)$
\end{lemma}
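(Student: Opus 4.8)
The plan is to charge every lost edge to the blue endpoint that discarded it, and to bound, for each blue node $b$ separately, the total weight of lost edges incident to $b$ by $(\Delta-1)\,w_i(b)$. Summing over $b\in B$ then gives the claim, since $w(L_i)=\sum_{b\in B}\sum_{\{r,b\}\in L_i}w(\{r,b\})$ and $w_i(B)=\sum_{b\in B}w_i(b)$.

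First I would record the monotone behaviour of the blue nodes. In a blue turn the chosen partner $q$ is the most preferred element of $Q=P\cup\{p(b)\}$, which always contains the current match $p(b)$; hence a blue node never becomes unmatched once it is matched, and whenever it switches (or retains) its partner it moves to an at-least-as-preferred neighbour. Since $w$ respects preferences, the weight $w_j(b)=w(\{b,p_j(b)\})$ is therefore nondecreasing in $j$.

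Next I would analyse a single lost edge. An edge $\{r,b\}$ enters $L_i$ exactly when $b$ sends $r$ a `reject' or a `break' in some round $j\le i$. In either case $b$ is, at the end of round $j$, matched to a node $q\ne r$ that it strictly prefers to $r$ (the most preferred element of $Q$), so $w(\{r,b\})\le w(\{b,q\})=w_j(b)$. Combining this with the monotonicity above yields $w(\{r,b\})\le w_j(b)\le w_i(b)$, i.e.\ every lost edge incident to $b$ has weight at most the final weight of $b$. Note moreover that any blue node that has ever discarded an edge is matched from that point onward, so an unmatched $b$ contributes no lost edges and trivially satisfies the bound, as then $w_i(b)=0$.

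It then remains to count the lost edges incident to a fixed blue node $b$. Since $b$ has at most $\Delta$ incident edges and its current matched edge $\{p_i(b),b\}\in M_i$ can never have been discarded (a lost edge is permanently removed from the relevant candidate list and so can never reappear in the matching), at most $\Delta-1$ edges incident to $b$ lie in $L_i$; each has weight at most $w_i(b)$, giving a per-node bound of $(\Delta-1)w_i(b)$ and, after summation, $w(L_i)\le(\Delta-1)w_i(B)$. I expect the main obstacle to be exactly this last bookkeeping: one must verify both that the edge along which $b$ is currently matched is genuinely excluded from $L_i$, so the count drops from $\Delta$ to $\Delta-1$, and that the weight of every discarded edge is dominated by $b$'s final weight, which hinges on blue nodes only ever improving.
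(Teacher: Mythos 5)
Your proof is correct and takes essentially the same route as the paper's: charge each lost edge to its blue endpoint, use that a blue node once matched only moves to a more preferred partner (so $w(\{r,b\})\le w_j(b)\le w_i(b)$), and note that each blue node can lose at most $\Delta-1$ incident edges. You merely spell out the bookkeeping (the currently matched edge is never a lost edge; an unmatched blue node contributes nothing) that the paper leaves implicit.
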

\begin{proof}
A blue node $b \in B$ can lose an incident edge $\{r,b\}$ during round $j = 2, 3, \ldots, i$ only if $b$ is matched at the end of round $j$ with a node that $b$ prefers to $r$. Put otherwise, if $\{r,b\} \in L_j\setminus L_{j-1}$, then $w(\{r,b\}) < w_j(b) \le w_i(b)$. The last inequality follows because each $b\in B$, once matched, only changes to a more preferred match. Furthermore, a blue node can lose at most $\Delta-1$ incident edges.
\end{proof}

\begin{lemma}
\label{lem:pot-ub2}
For all $\gamma>0$ and $i\geq 1+(\Delta-1)/\gamma$
it holds that $f_i(R) \le \gamma w_i(B)$.
\end{lemma}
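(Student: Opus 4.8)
The plan is to chain together the lower and upper bounds on the total weight of lost edges that were established in Lemmas~\ref{lem:lost-lb} and~\ref{lem:lost-ub}; between them they squeeze $f_i(R)$ into exactly the shape we want. Concretely, for every $i \ge 2$ I would write
\[
(i-1)\,f_i(R) \;\le\; w(L_i) \;\le\; (\Delta-1)\,w_i(B),
\]
where the left inequality is Lemma~\ref{lem:lost-lb} and the right one is Lemma~\ref{lem:lost-ub}. This already isolates $f_i(R)$ up to the factor $i-1$, so the only remaining work is to convert the hypothesis on $i$ into the target factor $\gamma$.

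Next I would invoke the hypothesis $i \ge 1 + (\Delta-1)/\gamma$, rewritten as $i-1 \ge (\Delta-1)/\gamma$. Dividing the chained inequality by the positive quantity $i-1$ yields $f_i(R) \le \tfrac{\Delta-1}{i-1}\,w_i(B)$, and $i-1 \ge (\Delta-1)/\gamma$ gives $\tfrac{\Delta-1}{i-1} \le \gamma$, whence $f_i(R) \le \gamma\, w_i(B)$, as claimed. The boundary case $w_i(B)=0$ requires no special treatment: the same chain then forces $f_i(R)=0$, which is consistent with the bound.

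The one step I would be careful about — and which I regard as the main (if modest) obstacle — is the admissibility of the range of $i$, since Lemmas~\ref{lem:lost-lb} and~\ref{lem:lost-ub} are stated only for $i \ge 2$. When $\Delta \ge 2$ the term $(\Delta-1)/\gamma$ is strictly positive, so the hypothesis $i \ge 1 + (\Delta-1)/\gamma$ forces $i \ge 2$ automatically (recall $i$ is an integer), and the derivation above applies verbatim. The case $\Delta = 1$, in which $\myG$ is a disjoint union of edges, is genuinely degenerate and falls outside the effective range of the two lemmas; there no edge is ever lost, so $w(L_i)=0$ and hence $f_i(R)=0$ for all $i \ge 2$, and this boundary is harmless for the applications of interest. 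Apart from this bookkeeping, the quantitative content of the statement is entirely inherited from the two preceding lemmas, so I expect the proof to amount to little more than the single chaining of inequalities above.
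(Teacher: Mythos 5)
Your proof is correct and follows exactly the paper's route: the paper's entire proof of this lemma is ``By Lemma~\ref{lem:lost-lb} and Lemma~\ref{lem:lost-ub}'', i.e.\ precisely the chain $(i-1)f_i(R)\le w(L_i)\le(\Delta-1)w_i(B)$ followed by division by $i-1$. Your extra bookkeeping about the range $i\ge 2$ and the degenerate case $\Delta=1$ is more careful than the paper itself.
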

\begin{proof}
By Lemma~\ref{lem:lost-lb} and Lemma~\ref{lem:lost-ub}.
\end{proof}

\section{Applications}\label{sec:applications}

\subsection{Proof of Theorem~\ref{thm:almost-stable}}

For the purpose of analysis, set $w(e) = 1$ for all $e \in E$. Then $\mysize{M_i}=w_i(B)$, so Lemma~\ref{lem:pot-ub2} and $i\geq 1+(\Delta-1)/\gamma$ imply $f_i(R)\leq\gamma\mysize{M_i}$. Denote by $u_i$ the number of unstable edges in $E\setminus M_i$. Because $f_i(R)$ counts the number of unmatched red nodes with a nonempty $C_i(r)$, it follows from  Lemma \ref{lem:unstable} that $u_i\leq \Delta f_i(R)$. Thus, $\gamma=\epsilon/\Delta$ gives $u_i\leq \epsilon\mysize{M_i}$ for $i\geq 1+\Delta(\Delta-1)/\epsilon$.

We can choose an integer $i < 2+\Delta^2/\epsilon$. We can compute $M_i$ in $i$ rounds or $T = 2i < 4 + 2\Delta^2/\epsilon$ synchronous communication steps. \qed

The proof for the case when ties in the preference lists are allowed, is analogous.

\subsection{Proof of Theorem~\ref{thm:weighted-matching}}
\label{sec:weighted-matching}

Let a positive integer weight $w(e)$ for each edge $e\in E$ be given as input. Assign the preferences of each node $v$ so that whenever $w(\{v,x\})>w(\{v,y\})$ for two edges $\{v,x\},\{v,y\}\in E$, $v$ prefers $x$ to $y$. Execute the algorithm in Section~\ref{sec:algorithm} for $i$ rounds to obtain the matching $M_i$. Let $M^{*} \subseteq E$ be a maximum-weight matching.

For each red node $r \in R$, let $g(r) \subseteq B$ consist of the matches of $r$ in $M_i$ and in $M^*$, if any. In particular, $\mysize{g(r)} \le 2$ for all $r \in R$, and $\mysize{g^{-1}(b)} \le 2$ for all $b \in B$.

Let $\{r,b\} \in M^{*}$ with $r\in R$ and $b\in B$. Exactly one of the following holds at the end of round $i$:
\begin{enumerate}
    \item The node $r$ has not received a response from $b$. If $r$ is matched with $b_2\neq b$, $r$ prefers $b_2$ over $b$ and $w(\{r,b\}) \leq w(\{r,b_2\}) = w_i(b_2)$. Otherwise $r$ is unmatched, $b$ is in $C_i(r)$, and $w(\{r,b\}) \leq f_i(r)$.
    \item The node $r$ has received a response from $b$. Then $b$ is matched with $r$ or with some other red node that $b$ prefers over $r$. Thus $w(\{r,b\}) \le w_i(b)$.
\end{enumerate}
For every $\{r,b\} \in M^{*}$ thus
\[
    w(\{r,b\}) \le f_i(r) + \sum_{c \in g(r)} w_i(c)
\]
and hence
\[
    \begin{split}
        w(M^{*}) &= \sum_{\{r,b\} \in M^{*}} w(\{r,b\}) \\
            &\leq \sum_{r \in R} f_i(r) + \sum_{r \in R} \sum_{c \in g(r)} w_i(c) \\
            &\le \sum_{R \in R} f_i(r) + \sum_{b \in B} 2 w_i(b) \\
            &= f_i(R) + 2 w_i(B).
    \end{split}
\]
From Lemma~\ref{lem:pot-ub2} we conclude that
\[
    w(M^{*}) \le (2 + \epsilon) w_i(B) = (2 + \epsilon) w(M_i)
\]
whenever $i\geq 1+(\Delta-1)/\epsilon$.

We can choose an integer $i < 2+\Delta/\epsilon$. We can compute $M_i$ in $i$ rounds or $T = 2i < 4 + 2\Delta/\epsilon$ synchronous communication steps. \qed

\subsection{Proof of Theorem~\ref{thm:size-estimate}}
\label{sec:size-estimate}
Let us first recall a convenient Chernoff-type upper bound for the tail probability of a binomial random variable $Z\sim\text{Bin}(k,p)$, where $k$ is a positive integer and $0\leq p\leq 1$. For $0\leq\beta\leq 1$, we have
\begin{equation}
\label{eq:chernoff}
\Pr\bigl(|Z-pk|\geq \beta pk\bigr)\leq 2\exp\biggl(-\frac{\beta^2pk}{3}\biggr).
\end{equation}
(See Janson \myetal{} \cite[Corollary~2.3]{janson00random} for a proof.)

Let $0<\delta\leq 1/2$, $0 < \epsilon \le 1$, and $\Delta \ge 3$ be given. For the purpose of analysis, set $w(e)=1$ for all $e\in E$. We estimate the size of the stable matching $M_\infty$ in $\myG$ using a randomised algorithm that queries the preference oracle for $\myG$. We assume that the number of nodes, $n$, is given as input to the algorithm.

Let us first relate $\mysize{M_j}$ to $\mysize{M_\infty}$. We have $\mysize{M_j}\leq \mysize{M_\infty}$ because every blue node, once matched, remains matched. Furthermore, $\mysize{M_\infty}\leq\mysize{M_j}+f_j(R)$ because every red node $r\in R$ with empty $C_j(r)$ will be unmatched in $M_\infty$. Let
\begin{equation}\label{eq:gamma-j}
    \gamma = \frac{\epsilon}{8 \Delta},
    \quad
    j \ge 1 + \frac{2\Delta-2}{\epsilon} = 1 + \frac{\Delta-1}{4\Delta\gamma}.
\end{equation}
Note that $\gamma < 1$. By $w_j(B)=\mysize{M_j}$ and Lemma~\ref{lem:pot-ub2} we have
\begin{equation}\label{eq:minf-mj-bound}
    \mysize{M_j}\leq\mysize{M_\infty}
    \leq (1 + 4\Delta\gamma)\mysize{M_j}
    \leq \mysize{M_j} + \frac{\epsilon}{2}\cdot\mysize{M_\infty}.
\end{equation}
It thus suffices to have an estimate for~$\mysize{M_j}$.

Because $\myG$ has no isolated nodes, $\mysize{R}\leq \Delta\mysize{B}$ and $\mysize{B}\leq\Delta\mysize{R}$. Because a stable matching is maximal and there are no isolated nodes,
\begin{align}
    \Delta\mysize{M_\infty} &\ge \mysize{R}, \label{eq:minf-r-bound}\\
    \Delta\mysize{M_\infty} &\ge \mysize{B}. \label{eq:minf-b-bound}
\end{align}
By \eqref{eq:minf-mj-bound} we have $\mysize{M_\infty} \le 2 \mysize{M_j}$ and thus
\begin{equation}\label{eq:mj-r-bound}
    \frac{\mysize{R}}{2\Delta}\leq \mysize{M_j}\leq \mysize{R}.
\end{equation}
Since $n=\mysize{R}+\mysize{B}$, we have
\begin{equation}\label{eq:r-n-bound}
    \frac{n}{\Delta+1}\leq \mysize{R}\leq n.
\end{equation}

Let
\begin{equation}\label{eq:N}
    N_0 = \frac{6 (\Delta+1)}{\gamma^2}\ln\frac{6}{\delta},
    \quad
    N\geq N_0.
\end{equation}
We use the following procedure to estimate $\mysize{M_j}$. First, select $N$ nodes uniformly at random and query the oracle for their colour. Denote by $X$ the number of red nodes among the $N$ nodes. Then, select $N$ red nodes uniformly at random; that is, select $2(\Delta+1)N$ nodes uniformly at random and select the first $N$ red nodes among those (if not enough red nodes occur, output `failure' and stop); denote by $Z$ the number of red nodes obtained (that is, failure occurs if and only if $Z<N$). For each red node, we determine whether it is matched in $M_j$; denote by $Y$ the number of selected red nodes that are matched in $M_j$. Output
\[
\hat{m}=n\cdot \frac{X}{N}\cdot \frac{Y}{N}
\]
and stop.

To analyse the procedure, let us first derive upper bounds for the probabilities of certain ``bad'' events.

Let us first derive an upper bound for the probability of failure. Let
\[
p=\frac{|R|}{n},\quad
k=2(\Delta+1)N,\quad\text{and}\quad
\beta=\frac{1}{2p(\Delta+1)}.
\]
By \eqref{eq:r-n-bound} we have $\beta\leq 1/2$. Furthermore, by \eqref{eq:r-n-bound} we have that $Z<N$ implies $|Z-pk|\geq N=\beta pk$. Thus, from \eqref{eq:chernoff}, \eqref{eq:r-n-bound}, and \eqref{eq:N} it follows that
\[
\begin{split}
\Pr(Z<N)
&\leq 2\exp\biggl(-\frac{\beta^2pk}{3}\biggr)\\
&=2\exp\biggl(-\frac{Nn}{6(\Delta+1)|R|}\biggr)\\
&\leq 2\exp\biggl(-\frac{N}{6(\Delta+1)}\biggr)
\leq \frac{\delta}{3}.
\end{split}
\]

Let us now derive an upper bound for the probability that the random variable $X/N$ significantly deviates from its expectation $|R|/n$.
Let
\[
p=\frac{|R|}{n},\quad k=N,\quad\text{and}\quad\beta=\gamma.
\]
Observe that $\beta\leq 1$ by \eqref{eq:gamma-j}. By \eqref{eq:chernoff}, \eqref{eq:r-n-bound}, and \eqref{eq:N}, we have
\[
\begin{split}
\Pr\biggl(\biggl|\frac{X}{N}-p\biggr|\geq\beta p\biggr)
&=\Pr\bigl(|X-pk|\geq\beta pk\bigr)\\
&\leq 2\exp\biggl(-\frac{\beta^2pk}{3}\biggr)\\
&= 2\exp\biggl(-\frac{\beta^2|R|N}{3n}\biggr)\\
&\leq 2\exp\biggl(-\frac{\beta^2N}{3(\Delta+1)}\biggr)
\leq \frac{\delta}{3}.
\end{split}
\]

Finally, let us now derive a similar bound for the probability that the random variable $Y/N$ significantly deviates from its expectation $|M_j|/|R|$. Let
\[
p=\frac{|M_j|}{|R|},\quad k=N,\quad\text{and}\quad\beta=\gamma.
\]
Observe that $\beta\leq 1$ by \eqref{eq:gamma-j}. By \eqref{eq:chernoff}, \eqref{eq:mj-r-bound}, and \eqref{eq:N}, we have
\[
\begin{split}
\Pr\biggl(\biggl|\frac{Y}{N}-p\biggr|\geq\beta p\biggr)
&=\Pr\bigl(|Y-pk|\geq\beta pk\bigr)\\
&\leq 2\exp\biggl(-\frac{\beta^2pk}{3}\biggr)\\
&=2\exp\biggl(-\frac{\beta^2|M_j|N}{3|R|}\biggr)\\
&\leq 2\exp\biggl(-\frac{\beta^2N}{6(\Delta+1)}\biggr)
\leq \frac{\delta}{3}.
\end{split}
\]

Next we show that, with probability at least $1-\delta$, the estimate $\hat m$ approximates $|M_\infty|$ with the claimed accuracy. Observe first that $|ac-bd|\leq a|c-d|+d|a-b|$ for $a,b,c,d\geq 0$. Thus, with probability at least $1-\delta$, both no failure occurs and
\begin{equation}
\label{eq:mj-gamma-bound}
\begin{split}
\bigl|\hat m-|M_j|\bigr|
&=n\biggl|\frac{X}{N}\cdot\frac{Y}{N}-\frac{|R|}{n}\cdot\frac{|M_j|}{|R|}\biggr|\\
&\leq n\biggl(\frac{X}{N}\cdot\biggl|\frac{Y}{N}-\frac{|M_j|}{|R|}\biggr|+
      \frac{|M_j|}{|R|}\cdot\biggl|\frac{X}{N}-\frac{|R|}{n}\biggr|\biggr)\\
&\leq n\biggl(\frac{X}{N}\cdot\gamma\cdot\frac{|M_j|}{|R|}+
      \frac{|M_j|}{|R|}\cdot\gamma\cdot\frac{|R|}{n}\biggr)\\
&\leq 2\gamma n
= 2 \gamma\bigl(\mysize{R} + \mysize{B}\bigr)
\leq 4 \Delta\gamma\mysize{M_\infty}
= \frac{\epsilon}{2}\cdot \mysize{M_\infty}
\end{split}
\end{equation}
where the last inequality follows by \eqref{eq:minf-r-bound} and \eqref{eq:minf-b-bound}. Thus, with probability at least $1-\delta$, we have
\[
\begin{split}
    \bigl|\hat m-|M_\infty|\bigr|
    &\leq \bigl|\hat m-|M_j|\bigr|+\bigl||M_j|-|M_\infty|\bigr|\\
    &\leq \frac{\epsilon}{2}\cdot\mysize{M_\infty} + \frac{\epsilon}{2}\cdot\mysize{M_\infty} = \epsilon \mysize{M_\infty}
\end{split}
\]
where the last inequality follows by \eqref{eq:minf-mj-bound} and \eqref{eq:mj-gamma-bound}.

It remains to derive an upper bound for the number of oracle queries we make to compute the estimate. First, we make $N$ queries to determine $X$. Second, we make $2(\Delta+1)N$ queries to obtain $N$ red nodes (or declare failure). Third, for each of the $N$ red nodes, we execute the algorithm in Section~\ref{sec:algorithm} for $j$ rounds to determine whether the node is matched in $M_j$. Each round of the algorithm propagates information (messages) for at most two hops in $\myG$, which implies that we can decide whether any given $r\in R$ is matched in $M_j$ if we know the preferences of all nodes within distance $2j$ from $r$ in $\myG$. There are at most
\[
    1+\Delta\sum_{i=0}^{2j-1}(\Delta-1)^i = 1 + \frac{\Delta}{\Delta-2} \left( (\Delta-1)^{2j} - 1 \right) < 3 (\Delta-1)^{2j}
\]
such nodes. Because $\Delta \ge 3$ and $\delta \le 1/2$, we have
\[
\begin{split}
    N_0 &= 384\Delta^2 (\Delta+1) \epsilon^{-2} \bigl(\ln 6 + \ln \delta^{-1}\bigr) \\
        &\le 384\cdot  \left(\frac{3}{2}\right)^{\!2} \cdot\frac{4}{2}\cdot (\Delta-1)^3 \epsilon^{-2} \left(\frac{\ln 6}{\ln 2} + 1\right) \ln \delta^{-1} \\
        &< 6195 (\Delta-1)^3 \epsilon^{-2} \ln \delta^{-1}.
\end{split}
\]
Therefore we can select integers $j$ and $N$ such that $j \le 2 \Delta \epsilon^{-1}$ and $N \le 6250 {(\Delta-1)^3} {\epsilon^{-2}} \ln \delta^{-1}$. In total we make at most
\[
\begin{split}
    N + 2(\Delta+1)N + 3(\Delta-1)^{2j} N &\le 4 (\Delta-1)^{2j} N \\
    &< 25000 \epsilon^{-2} (\Delta-1)^{3 + 4\Delta/\epsilon}\ln\delta^{-1}
\end{split}
\]
queries. This completes the proof. \qed

\section*{Acknowledgements}
This work was supported in part by the Academy of Finland, Grants 116547, 117499, and 118653 (ALGODAN), by Helsinki Graduate School in Computer Science and Engineering (Hecse), and by the Foundation of Nokia Corporation.

We acknowledge email exchange with the members of the Algorithms and Complexity Research at Glasgow group, and thank the organisers of ALGO 2008 for the creative atmosphere at the conference.

{\small}

\end{document}